\begin{document}

\let\doendproof\endproof
\renewcommand\endproof{~\hfill\qed\doendproof}

\spnewtheorem{myremark}{Remark}{\bfseries}{\itshape}
\spnewtheorem{obs}{Observation}{\bf}{\it}
\newcommand{\down}[1]{\raisebox{-.4ex}{#1}}

\makeatletter
    \renewcommand*{\@fnsymbol}[1]{\ensuremath{\ifcase#1\or *\or \dagger\or \S\or
       \mathsection\or \mathparagraph\or \|\or **\or \dagger\dagger
       \or \ddagger\ddagger \else\@ctrerr\fi}}
\makeatother

\newcommand*\samethanks[1][\value{footnote}]{\footnotemark[#1]}

\title{A \boldmath$(7/2)$-Approximation Algorithm for Guarding Orthogonal Art
Galleries with Sliding Cameras}

\author{
Stephane Durocher$^1$\thanks{{Work of the author is supported in part by the Natural Sciences and Engineering Research Council of Canada
(NSERC).}} \and 	
Omrit Filtser$^2$\and
Robert Fraser$^1$\and\\
Ali Mehrabi$^3$\thanks{{Work of the author is supported by the Netherlands' Organization for Scientific Research (NWO)}}\and
Saeed Mehrabi$^1$\thanks{{Work of the author is supported in part by a University of Manitoba Graduate Fellowship (UMGF).}}
}

\institute{
$^1$Department of Computer Science, University of Manitoba, Canada.\\
$^2$Department of Computer Science, Ben-Gurion University of the Negev, Israel.\\
$^3$Department of Mathematics and Computer Science, Eindhoven University of Technology, Netherlands.\\
\email{durocher@cs.umanitoba.ca, omritna@cs.bgu.ac.il,  fraser@cs.umanitoba.ca, amehrabi@win.tue.nl, mehrabi@cs.umanitoba.ca}
}


\newcommand{\keywords}[1]{\par\addvspace\baselineskip
\noindent\keywordname\enspace\ignorespaces#1}
\newtheorem{observation}{Observation}{\bfseries}{\itshape}%

\pagenumbering{arabic}
\pagestyle{plain}

\maketitle

\begin{abstract}
Consider a sliding camera that travels back and forth along
an orthogonal line segment $s$ inside an orthogonal polygon $P$ with $n$ vertices. The camera can see a point $p$ inside
$P$ if and only if there exists a line segment containing $p$ that crosses $s$ at a right angle and 
is completely contained in $P$. In the minimum sliding cameras (MSC)
problem, the objective is to guard $P$ with the minimum number of sliding cameras.
In this paper, we give an $O(n^{5/2})$-time $(7/2)$-approximation algorithm to the
MSC problem on any simple orthogonal polygon with $n$ vertices,
answering a question posed by Katz and Morgenstern (2011).
To the best of our knowledge, this is the first constant-factor approximation algorithm for this problem.
\end{abstract}

\section{Introduction}
\label{sec:introduction}
In the classical art gallery problem, we are given
a polygon and the objective is to cover the polygon with the union of visibility
regions of a set of points guards while minimizing the number of guards. The problem was introduced by Klee
in 1973~\cite{orourke1987}. Two years later, Chv\'{a}tal~\cite{chvatal1975} showed that $\lfloor n/3\rfloor$
point guards are always sufficient and sometimes necessary to guard the polygon.
Since then, the problem and its many variants have been studied extensively for
different types of polygons (e.g., orthogonal polygons~\cite{dietmar1995} and polyominoes~\cite{biedl2012}), different types of
guards (e.g., points and line segments) and different visibility types.

Recently, Katz and Morgenstern~\cite{katz2011} introduced a variant of the art gallery problem
in which \emph{sliding cameras} are used to guard an
orthogonal polygon. Let $P$ be an orthogonal polygon with $n$ vertices. A sliding camera travels back and forth along an
orthogonal line segment $s$ inside $P$. The camera can
see a point $p\in P$ if the there is a point $q\in s$ such that
$pq$ is a line segment normal to $s$ that is completely inside $P$. In the minimum
sliding cameras ({\bf MSC}) problem, the objective is to guard $P$ using a
the minimum number of sliding cameras.

In this paper, we give an $O(n^{5/2})$-time $(7/2)$-approximation algorithm to the minimum sliding cameras (MSC)
problem on any simple orthogonal polygon.
To do this, we introduce the \emph{minimum guarded sliding cameras} ({\bf MGSC})
\emph{problem}. In the MGSC problem, the objective is to guard $P$ using a set of minimum
cardinality of guarded sliding cameras.
A sliding camera $s$ is guarded by a sliding camera $s'$ if every point on $s$ is seen by some point on $s'$. Note that $s$ and $s'$ could be perpendicular in which case $s'$ and $s$ mutually guard each other if and only if they intersect. If $s$ and $s'$ mutually guard each other and have the same orientation (e.g., both are horizontal), then the visibility region of $s$ is identical to that of $s'$. Consequently, when minimizing the number of sliding cameras in the MGSC problem, it suffices to consider solutions in which each horizontal sliding camera is guarded by a vertical sliding camera and vice-versa.
We first establish a connection between the MGSC problem and a related guarding problem on grids.

A grid $D$ is a connected union of vertical and horizontal line segments;
each maximal line segment in the grid is called a \emph{grid segment}. A \emph{point guard} $x$
in grid $D$ is a point that sees a point $y$ in the grid if the
line segment $xy\subset D$. Moreover, a \emph{sliding camera}
$p\in D$ is a point guard that moves along a grid
segment $s\in D$. The camera $p$ can see a point $q$ on the grid if and
only if there exists a point $p'\in s$ such that the line segment
$\overline{p'q}\subset D$; that is, point $q$ is seen by camera $p$ if
either $q$ is located on $s$ or $q$ belongs to a grid segment that
intersects $s$. Note that sliding cameras are called \emph{mobile guards} in
grid guarding problems~\cite{kosowski2004,adrian2006}. A \emph{guarded} set of point guards and a guarded set of sliding cameras on grids
are defined analogously to a guarded set of sliding cameras in polygons. A simple grid is defined as follows:

\begin{definition}[Kosowski et al.~\cite{adrian2006}]
A grid is \emph{simple} if \begin{inparaenum}[(i)]\item the
endpoints of all of its segments lie on the outer face of the planar
subdivision induced by the grid, and
\item there exists an $\epsilon>0$ such that every grid segment can
be extended by $\epsilon$ in both directions such that its new
endpoints are still on the outer face. \end{inparaenum}
\end{definition}

Throughout the paper, we denote a simple orthogonal polygon by $P$; note that the
polygon $P$ is a closed region.
The rest of the paper is organized as follows. Section~\ref{sec:relatedWork} presents
related work. In Section~\ref{sec:3approxAlgorithm}, we give our $(7/2)$-approximation
algorithm to the MSC problem and we conclude the paper in Section~\ref{sec:conclusion}.

\section{Related Work and Definitions}
\label{sec:relatedWork}
The minimum sliding cameras problem was introduced by Katz and Morgenstern~\cite{katz2011}.
They first considered a restricted version of the problem in which only vertical cameras are allowed; by reducing
the problem to the minimum clique cover problem on chordal graphs, they solved the problem exactly
in polynomial time. For the generalized case, where both vertical and horizontal cameras are allowed,
they gave a $2$-approximation algorithm for the MSC problem under the assumption that the polygon $P$
is $x$-monotone. 
Durocher and Mehrabi~\cite{durocher2013} showed that the MSC
problem is \textsc{NP}-hard when the polygon $P$ is allowed to have holes. They also gave an exact algorithm
that solves in polynomial time a variant of the MSC problem in which the objective is to minimize the sum of the lengths of line
segments along which cameras travel.

The guard problem on grids was first formulated by Ntafos~\cite{ntafos1986}.
He proved that a set of (stationary) point guards of minimum cardinality covering a grid of
$n$ grid segments has $n−m$ guards, where $m$ is the size of the maximum
matching in the intersection graph of the grid that can be found in $O(n^{5/2})$
time. Malafijeski and Zylinski~\cite{malafijeski2005} showed that the problem of
finding a minimum-cardinality set of guarded point guards for a grid is \textsc{NP}-hard.
Katz et al.~\cite{katz2005} showed that the problem of finding a minimum number
of sliding cameras covering a grid is \textsc{NP}-hard. Moreover,
Kosowski et al.~\cite{kosowski2004} proved that the problem of finding
the minimum number of guarded sliding cameras covering a grid (we call
this problem the {\bf MMGG} problem) is \textsc{NP}-hard.
Due to these hardness results, Kosowski et al.~\cite{adrian2006} studied
the MMGG problem on some restricted classes of grids. In particular, they show the following result
on simple grids:
\begin{theorem}[Kosowski et al.~\cite{adrian2006}.]
\label{thm:nSquarForGridProblem}
There exists an $O(n^2)$-time algorithm for solving the MMGG problem on
simple grids, where $n$ is the number of grid segments.
\end{theorem}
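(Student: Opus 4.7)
The plan is to encode the MMGG problem on a simple grid $D$ as an optimization problem on the bipartite intersection graph $G=(H\cup V,E)$, where $H$ (resp.\ $V$) is the set of horizontal (resp.\ vertical) grid segments and $(h,v)\in E$ iff the segments $h$ and $v$ intersect. From the definition of visibility on grids, a camera on segment $s$ sees all of $s$ together with every segment that intersects $s$; thus a set $C$ of chosen sliding cameras covers the grid iff $C$ is a \emph{dominating set} of $G$. Using the observation made earlier in the paper (that horizontal cameras need only be guarded by vertical cameras and vice-versa), $C$ is guarded iff the induced subgraph $G[C]$ has no isolated vertex, since a perpendicular neighbor in $G[C]$ is exactly a mutually guarding partner.

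The next step is to reduce the problem to a minimum edge cover computation. I would first argue, using the simple-grid condition, that in any optimal guarded camera set $C$ one may assume every pair of mutually guarding cameras are perpendicular and intersecting, so $C$ decomposes naturally along the edges of $G[C]$. Equivalently, I would exhibit an optimal $C$ whose vertices can be paired along a set of edges $F\subseteq E$ with every vertex of $C$ incident to some edge of $F$; this turns $C$ into (the vertex set of) an edge cover of $G[C]$. Combined with the dominating-set condition, I claim the minimum $|C|$ equals the minimum size of an edge cover of $G$ restricted to a dominating induced subgraph, which, in the simple-grid case, collapses to the minimum edge cover of $G$ itself. The simple-grid hypothesis is crucial here because it prevents certain segments from being ``saved'' by being covered only through perpendicular neighbors: the requirement that every endpoint lies on the outer face, and can be extended by $\epsilon$ there, rules out endpoints that would force additional structural gadgets.

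Once this structural equivalence is in place, the algorithm is standard: compute the intersection graph $G$ in $O(n^2)$ time by sweeping horizontal against vertical segments, then compute a minimum edge cover via Gallai's identity $|\text{min edge cover}|=|V(G)|-\mu(G)$, where $\mu(G)$ is a maximum matching in the bipartite graph $G$. Any maximum bipartite matching algorithm running in $O(n^2)$ time on this restricted class of graphs (exploiting, e.g., planarity of $G$ that follows from the grid embedding, or the bounded-geometry structure of simple grids) yields the claimed bound; each matched edge then corresponds to a pair of mutually guarding perpendicular cameras, and unmatched vertices are paired off with arbitrary incident edges to complete the cover.

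The main obstacle is the structural reduction in the second paragraph: showing that for simple grids one loses no optimality by restricting attention to edge covers of $G$. The delicate case is a segment $s$ that is not itself chosen but is ``dominated'' through an intersection with a chosen perpendicular segment $t$, while also serving implicitly in the guarding chain. One has to argue, via a local exchange argument exploiting the outer-face accessibility of $s$'s endpoints, that such a configuration can always be rewritten as an edge cover without increasing the number of cameras; this is where the $\epsilon$-extension condition of simplicity is genuinely used.
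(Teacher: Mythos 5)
This theorem is imported from Kosowski et al.\ and the paper gives no proof of it, so there is no internal argument to compare against; I can only assess your reconstruction on its own terms. Your formulation of the problem is correct: on the bipartite intersection graph $G$ of the grid segments, a camera set $C$ covers the grid iff $C$ is a dominating set of $G$, and $C$ is guarded iff $G[C]$ has no isolated vertex. The fatal step is the one you yourself flag as the ``main obstacle'': the claim that for simple grids the problem collapses to a minimum edge cover of $G$ itself. That claim is false even for simple grids. Take one vertical segment $v$ crossed by $k$ horizontal segments $h_1,\dots,h_k$, each extending past $v$ on both sides. All endpoints lie on the outer face and every segment can be extended by $\epsilon$, so this is a simple grid, and its intersection graph is the star $K_{1,k}$. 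The set $\{v,h_1\}$ is already a feasible guarded camera set ($v$ dominates every $h_i$, and $v$ and $h_1$ guard each other), so the optimum is $2$; but a minimum edge cover of $K_{1,k}$ has $k$ edges and touches all $k+1$ vertices, so whether you count edges or covered vertices your algorithm's output grows with $k$. The underlying confusion is that an edge cover of $G$ must, by definition, cover \emph{every} vertex of $G$, i.e.\ it forces essentially every segment to carry a camera, whereas the actual problem only requires the chosen segments to dominate the rest.

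A second warning sign: minimum edge cover is polynomial-time on every graph, and your argument uses simplicity only in a hand-waved exchange step, so if the collapse you describe were correct in any robust form it would be in tension with the NP-hardness of the MMGG problem on general grids that this very paper cites. The correct optimization target is a minimum dominating set $C$ of $G$ with $G[C]$ free of isolated vertices, and whatever Kosowski et al.\ actually do must exploit the global structure of simple grids (the outer-face/$\epsilon$-extension conditions) far more substantially than pairing cameras along edges of $G$. Note also that your appeal to an $O(n^2)$ maximum bipartite matching is unsubstantiated; Hopcroft--Karp gives $O(n^{5/2})$ in general, and you would need a genuine structural argument (not just an appeal to ``planarity'') to do better. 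As it stands, the proposal does not establish the theorem.
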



Throughout the paper, we denote optimal solutions for the MSC problem and the MGSC problem on $P$ by
$OPT_P$ and $OPT_{GP}$, respectively.
We denote the set of reflex vertices of $P$
by $V(P)$ and let $H_u$ and $V_u$ be the maximum-length
horizontal and vertical line segments, respectively, inside $P$ through a vertex
$u\in V(P)$. Let $L(P)=\{H_u\mid u\in V(P)\}\cup\{V_u\mid u\in V(P)\}$.
Let $L$ and $L'$ be two
orthogonal line segments (with respect to $P$ not necessarily each other) inside $P$; the \emph{visibility region} of $L$ is the union
of the points in $P$ that are seen by the sliding camera that travels along $L$. Moreover,
we say that $L$ dominates $L'$ if the visibility
region of $L'$ is a subset of that of $L$.

\section{A \boldmath$(7/2)$-Approximation Algorithm for the MSC Problem}
\label{sec:3approxAlgorithm}%
In this section, we present an $O(n^{5/2})$-time $(7/2)$-approximation algorithm for the MSC problem.

\subsection{Relating the MGSC and MMGG problems}
Consider an optimal solution $X$ for the MSC problem and let
$X'$ be the set of line segments obtained by taking two instances every line segment in $X$. We
observe that $X'$ is a feasible solution for the MGSC problem and, therefore, we have
the following observation.
\begin{observation}
\label{obs:mgscAndMSC}
An optimal solution for the MGSC problem on $P$ is a $2$-approximation to an
optimal solution for the MSC problem on $P$.
\end{observation}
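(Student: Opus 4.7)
The plan is to prove the bound $|OPT_{GP}|\le 2\cdot |OPT_P|$ by exhibiting an explicit feasible MGSC solution of size exactly $2|OPT_P|$. Starting from an optimal MSC cover $X$ of $P$, I would construct $X'$ by including two distinct copies $s_1, s_2$ of each sliding camera $s\in X$. The construction is purely combinatorial — no new geometric objects are introduced — so the argument reduces to checking the two requirements of MGSC feasibility.

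First I would verify coverage: since the visibility region of each $s_i$ equals that of $s$, the union of visibility regions of $X'$ equals the union of visibility regions of $X$, which covers $P$ by assumption. Second I would verify the guarded condition, which is the content of the ``factor of two'' in the construction. For each original camera $s\in X$, the two copies $s_1, s_2\in X'$ occupy the same line segment, so every point of $s_1$ lies on $s_2$ and is seen by itself. Hence $s_1$ is guarded by $s_2$ (and vice versa), so every camera in $X'$ is guarded by another camera in $X'$, as required. Combining these, $X'$ is feasible for MGSC with $|X'|=2|X|=2\cdot|OPT_P|$, and therefore $|OPT_{GP}|\le 2\cdot|OPT_P|$.

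The only potentially delicate point — and the one I would be explicit about — is the interpretation of the guarded condition: a camera must be guarded by \emph{another} camera in the solution, not merely by itself. It is precisely this requirement that forces the duplication and yields the factor of two; once the duplication is in place, the mutual-guarding verification is immediate from the fact that a segment is trivially visible to any camera occupying the same segment.
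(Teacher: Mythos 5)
Your proposal is correct and takes essentially the same approach as the paper: the paper also duplicates each segment of an optimal MSC solution $X$ to obtain a feasible MGSC solution $X'$ with $\lvert X'\rvert = 2\lvert X\rvert$. You simply spell out the mutual-guarding verification that the paper leaves as an unstated observation.
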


We first consider how to apply a solution for the MGSC problem to the MSC problem.
For the MGSC problem, the idea is to reduce the MGSC problem to the MMGG problem.
Given any simple orthogonal polygon $P$, 
we construct a grid $G_P$ associated with $P$ as follows: initially, let $G_P$ be the
set of all line segments in $L(P)$. Now, for any pair of reflex vertices
$u$ and $v$ where $H_u$ dominates $H_v$ (resp., $V_u$ dominates $V_v$) in $P$, we remove
$H_v$ (resp., $V_v$) from $G_P$; if two segments mutually dominate each other, remove
one of the two arbitrarily. Let $T_G$ be the set of remaining grid segments in $G_P$. Observe that $G_P$
can be constructed in $O(n^2)$ time, where $n$ is the number of vertices of $P$. We first show the following result:

\begin{lemma}
\label{lem:gIsSimple}%
Grid $G_P$ is a simple and connected grid.
\end{lemma}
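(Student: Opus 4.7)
The plan is to verify simplicity and connectedness separately.

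\textbf{Simplicity.} Every $s\in G_P$ is a maximal axis-aligned chord of $P$ of the form $H_u$ or $V_u$, so both endpoints of $s$ lie on $\partial P$. Since all grid segments lie inside $P$, the complement $\mathbb{R}^2\setminus P$ belongs to the outer face of the planar subdivision induced by $G_P$, so every endpoint is adjacent to the outer face (condition (i)). For condition (ii), choose $\epsilon>0$ smaller than every positive pairwise distance between intersection points of the arrangement; then extending $s$ past either endpoint by $\epsilon$ exits $P$ by maximality, meets no other grid segment, and lands in the outer face.

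\textbf{Connectedness.} I argue in two stages. \emph{Stage 1:} $\bigcup L(P)$ is connected. Treating every segment of $L(P)$ as a cut partitions $P$ into axis-aligned rectangles (each reflex vertex of $P$ is locally resolved by both $H_u$ and $V_u$), and since $P$ is simply connected, the rectangle-adjacency dual graph is connected. Each reflex vertex is a corner of some rectangle, so a dual path between rectangles at $u$ and $v$ gives a sequence of shared edges, each a subset of some segment in $L(P)$. Consecutive shared edges on a common rectangle are either perpendicular (meeting at a shared corner) or parallel (linked by a perpendicular side of the rectangle); in the parallel case, that perpendicular side also lies in $L(P)$, because its two endpoints are reflex vertices of $P$ whose corresponding $H$-segment covers it. This yields a path in $\bigcup L(P)$ from $u$ to $v$.

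\emph{Stage 2:} dominance removal preserves connectedness. Suppose $s=H_v$ and $s'=V_w$ in $L(P)$ intersect at $q$, with dominators $\delta(s)=H_u$ and $\delta(s')=V_{u'}$ in $G_P$. Because $q\in H_v$ lies in the visibility region of $H_v$, dominance places $q$ in the visibility region of $H_u$, so the vertical column through $q$ reaches $H_u$ inside $P$; by maximality of $V_w$ this column sits inside $V_w$, forcing $V_w\cap H_u\neq\emptyset$. Applying the same argument with $V_{u'}$ dominating $V_w$, and invoking maximality of $H_u$ on the resulting horizontal shortcut, yields $H_u\cap V_{u'}\neq\emptyset$. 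Combined with Stage~1, $G_P$ is connected.

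\textbf{Main obstacle.} The crux is Stage~1: verifying that the windows decomposition has only rectangular cells, that its rectangle-adjacency dual graph is connected when $P$ is simply connected, and that bridging parallel consecutive shared edges can always be done via a perpendicular side in $L(P)$. The last point requires observing that such a perpendicular side can lie on $\partial P$ only when its endpoints are reflex vertices of $P$, at which point the $H$-segments through those vertices cover it.
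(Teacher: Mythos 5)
Your proof is correct and follows essentially the same route as the paper's: simplicity from the fact that every segment of $L(P)$ is a maximal chord with both endpoints on $\partial P$, and connectedness by first arguing that $\bigcup L(P)$ is connected and then that removing a dominated segment is harmless because every segment meeting it also meets its dominator (your visibility-region argument in Stage~2 is exactly the justification behind the paper's ``it is straightforward to see''). Your Stage~1 rectangle-decomposition argument supplies detail the paper merely asserts; the only imprecision is in the parallel case, where it is not that \emph{both} endpoints of the perpendicular side are reflex vertices, but that at least one endpoint must be a reflex vertex of $P$ whose \emph{vertical} (i.e., perpendicular) maximal chord covers that side --- a fixable slip that does not affect the conclusion.
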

\begin{proof}
It is straightforward from the construction of $G_P$ that both endpoints of each grid
segment in $T_G$ lie on the boundary of $P$; this means that the endpoints of every grid
segment in $T_G$ lie on the outer face of $G_P$ and, therefore, $G_P$ is simple. To show that $G_P$
is connected, we first observe that the grid induced by the line segments in $L(P)$ is connected.
We now need to show that the grid remains connected after removing the set of grid
segments that are dominated by other grid segments. Let $s\in L(P)$ be a grid segment
that is removed from $L(P)$ (i.e., $s\notin T_G$). It is straightforward to see that the set of
grid segments that are intersected by $s$ are also intersected by $s'\in T_G$, where $s'$
is the grid segment that dominates $s$. Therefore, grid $G_P$ is connected.
\end{proof}
\begin{figure}[t]
\centering%
\includegraphics[width=0.85\textwidth]{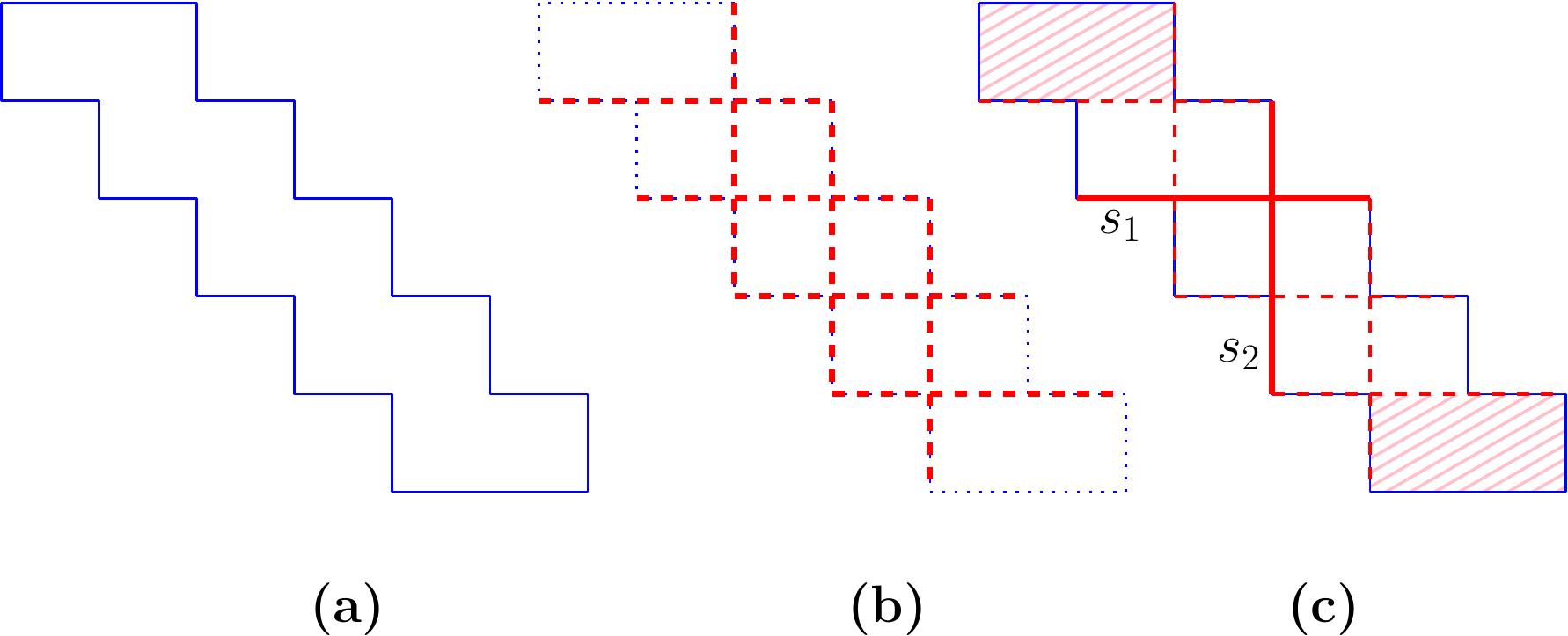}
\caption{(a) A simple orthogonal polygon $P$.
(b) Grid $G_P$ with set $T_G$ of grid segments shown in red. (c) The set
$S=\{s_1, s_2\}$ (represented by solid red line segments) is an optimal solution for the MMGG problem
on $G_P$, but $s_1$ and $s_2$ cannot guard $P$ entirely; in particular, the hatched regions of $P$ are not guarded.}
\label{fig:notEquivalent}%
\end{figure}

The objective is to solve the MMGG problem on $G_P$ exactly
and to use the solution $S$, the set of guarded grid segments computed, as the solution to the MGSC problem.
However, $S$ is not always a feasible solution to the MGSC problem since some regions in $P$
might remain unguarded; see Figure~\ref{fig:notEquivalent}
for an example. In the following, we characterize the regions of $P$ that may remain unguarded by the line segments in $S$;
we call these the \emph{critical regions} of $P$.

Consider $S$ and choose any unguarded point $p$ inside $P$. Let $R_p$ be a maximal axis-aligned
rectangle contained in $P$ that
covers $p$ and is also not guarded by the line segments in $S$. We observe that \begin{inparaenum}[(i)]\item some line segments
in $T_G$ can guard $R_p$,
and that \item no such line segments are in $S$ since $R_p$ is unguarded. \end{inparaenum} 
Consider the maximal regions in $P$ that lie immediately above, below, left, and right of $R_p$;
any sliding camera that sees any part of $R_p$ must intersect one of these regions.
See Figure~\ref{fig:growingRp}(a) for an example; note that the hatched region
cannot contain any line segment in $S$ since $R_p$ is unguarded. Moreover, the hatched region must contain at least one line segment
of $T_G$ in both horizontal and vertical directions and without loss of generality we can assume that the length of these
line segments is maximal (i.e., both endpoints are on the boundary of $P$).

The rectangle $R_p$ defines a partition of $P$ into three parts: the \emph{vertical slab} through $R_p$, (i.e., the slab
whose sides are aligned with the vertical sides of $R_p$), the subpolygon of $P$ to the left of the vertical slab and the subpolygon of $P$ to the
right of the vertical slab. Similarly, another partition of $P$ can be obtained by considering the \emph{horizontal slab} through $R_p$; see
Figure~\ref{fig:growingRp}(b) for an example. We know that the union of the visibility regions of the line segments in $S$ is a
connected subregion of the plane. Therefore, the set $S$ can only be found on one side of each of the partitions of $P$ and so $S$ must be
in one \emph{corner} of the partitioned polygon. Without loss of generality, assume that $S$ is on the bottom left corner of the partitioned polygon
(see Figure~\ref{fig:growingRp}(b)).

Let $\overline{S}\subseteq T_G$ be the set of line segments that can see $R_p$. Note that $\overline{S}$
is non-empty since $H_u$ or $V_u$ sees $P$ entirely where $u$ denotes the nearest reflex vertex to $R_p$ in the horizontal or vertical slabs. Therefore, polygon $P$ is partitioned into three subpolygons (see Figure~\ref{fig:growingRp}(c)): the lower-left corner that is the location of $S$ denoted by $P_S$, the lower-right and upper-left corners that correspond to line segments in $\overline{S}$ denoted by $P_{\overline{S}}$, and the upper-right corner of the polygon that is unguarded denoted by $P_U$.
Each line segment in $\overline{S}$ intersects at least one line segment in $S$
since the line segments in $\overline{S}$ are not in $S$ and $S$ is feasible solution for the MMGG problem (see Figure~\ref{fig:growingRp}(c)).

\begin{figure}[t]
\centering%
\includegraphics[width=0.95\textwidth]{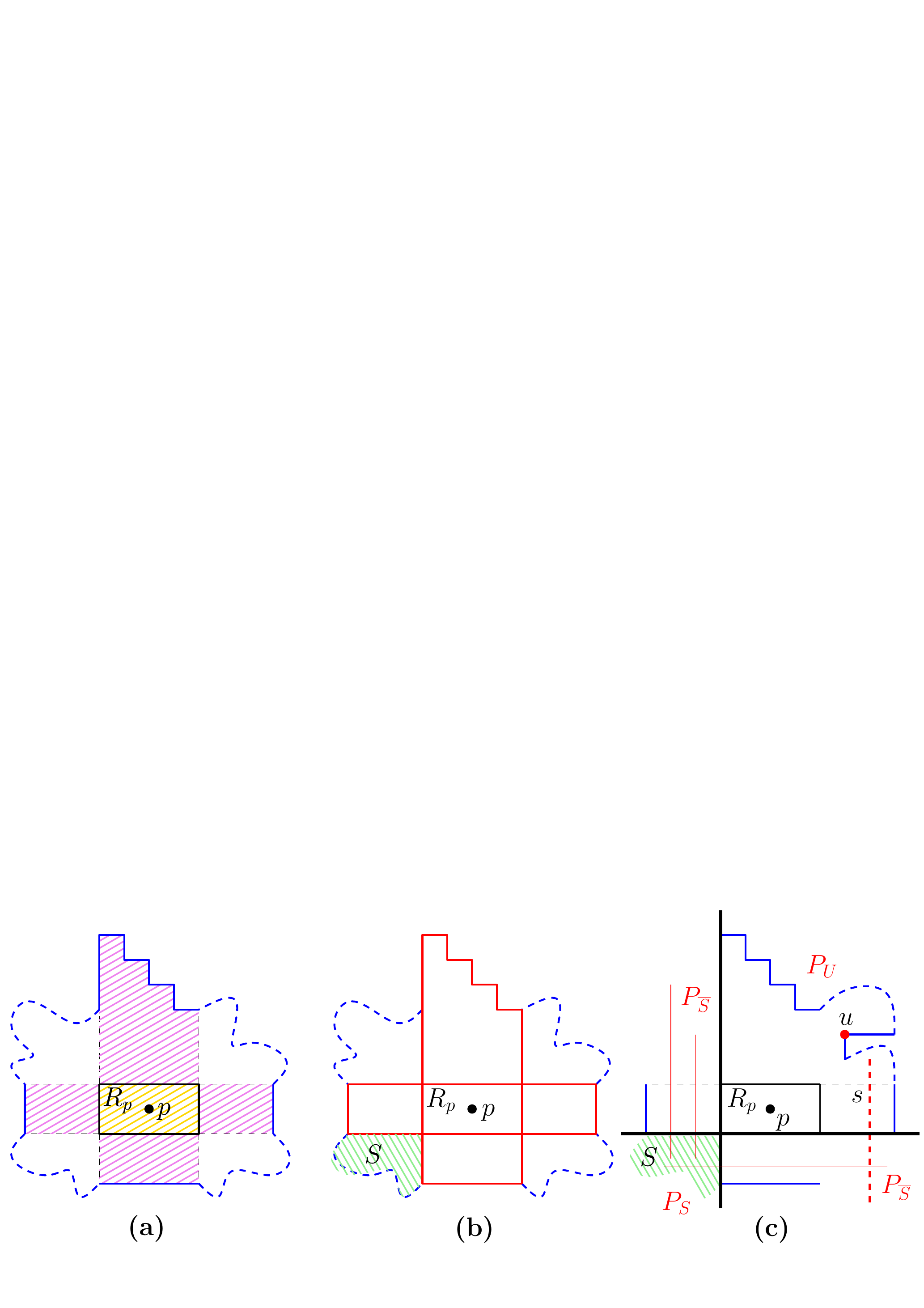}
\caption{(a) Point $p$ inside the polygon $P$ with rectangle $R_p$ hatched in gold. The purple hatched region
indicates the subregion of $P$ covered by growing $R_p$ orthogonally towards the boundaries of $P$; the boundary
of $P$ is shown in blue.
(b) The horizontal rectangle and the vertical histogram shown in red indicate, respectively, the horizontal and vertical slabs of the partitions induced
by rectangle $R_p$. The hatched region of $P$ on the bottom left corner of the partition indicates the location of the set $S$. (c) The partition of $P$ into three subpolygons $P_S$, $P_{\overline{S}}$ and $P_U$. The line segments in $\overline{S}$
(i.e., the set of line segments that can see $R_p$) are shown in red; observe that each line segment in $\overline{S}$ intersects at least one line
segment in $S$. The line segment $s$ illustrates Lemma~\ref{lem:opposite} and vertex $u$ is in support of Lemma~\ref{lem:contradiction}.}
\label{fig:growingRp}%
\end{figure}

\begin{lemma}
\label{lem:opposite}
No line segment in $T_G$ that is orthogonal to a line segment in $\overline{S}$ can intersect $P_U$.
\end{lemma}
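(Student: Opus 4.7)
The plan is to proceed by contradiction. Suppose some $L' \in T_G$ perpendicular to a segment $L \in \overline{S}$ intersects $P_U$. Without loss of generality, $L$ is horizontal and $L'$ is vertical; the other orientation is symmetric. Because $L$ sees $R_p$ and $S$ sits in the lower-left corner $P_S$, the segment $L$ lies at some $y$-coordinate $y_L$ strictly above the top of $R_p$, with its $x$-range covering the $x$-range of $R_p$; this places $L$ in the upper-left portion of $P_{\overline{S}}$, and the case where $L$ lies below $R_p$ (in the lower-right portion) is handled symmetrically. Since $L'$ is vertical and intersects $P_U$, its $x$-coordinate $x_{L'}$ is strictly greater than the right edge of $R_p$, and its $y$-range reaches above the top of $R_p$.

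The key step is to exploit that $S$ is a feasible solution for the MMGG problem on $G_P$: since $L' \in T_G$ and $L' \notin S$ (because $L'$ exits $P_S$), some $s \in S$ must cover $L'$ on the grid, which by the definition of grid visibility requires $s$ to intersect $L'$ directly at a grid point. I would then derive a contradiction by invoking the confinement of $S$ to the corner $P_S$ established earlier via the connected-visibility argument: both the $x$-range of any horizontal $s \in S$ and the $x$-coordinate of any vertical $s \in S$ are bounded by the left edge of $R_p$, which is strictly less than $x_{L'}$. Consequently no $s \in S$ can share a grid point with $L'$, contradicting MMGG-feasibility.

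The hard part will be to rigorously justify the strict confinement of $S$ within $P_S$ from the slab-partition argument, and to handle the case analysis for the position of $L$ relative to $R_p$ (above versus below, and the symmetric left/right counterparts when $L$ is vertical). In each case, the perpendicularity of $L$ and $L'$ provides the geometric setup in which the reach of $S$ inside $P_S$ is incompatible with the reach of $L'$ into $P_U$, yielding the same structural contradiction and establishing the lemma.
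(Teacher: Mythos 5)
Your proposal is correct and follows essentially the same argument as the paper: assume such a segment exists, observe that MMGG-feasibility of $S$ forces some segment of $S$ to intersect it, and derive a contradiction with the confinement of $S$ to the corner subpolygon $P_S$. The paper's proof is equally brief on the confinement point you flag as the "hard part," so no substantive difference remains.
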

\begin{proof}
To derive a contradiction, suppose without loss of generality that there is one such vertical line segment $s$ intersecting $P_U$ 
(as shown in Figure~\ref{fig:growingRp}(c)).
Since $S$ is connected and is a feasible solution for the MMGG problem, there must be a line segment in $S$ that
guards $s$. So, $S$ must contain a line segment in $P_{\overline{S}}$, which is a contradiction.
\end{proof}

By Lemma~\ref{lem:opposite}, we conclude that $R_p$ is guarded by the line
segments in $\overline{S}$,
but not by any line segment in $S$, and, furthermore, $S$ is restricted to a corner as described above (i.e., the subpolygon $P_S$). We now show that each of the regions of $P$ that are not guarded by
the line segments in $S$ must be a staircase with the reflex vertices oriented towards $S$.

To derive a contradiction, suppose that there exists a reflex vertex $u$ (as shown in Figure~\ref{fig:growingRp}(c)) in the unguarded subpolygon $P_U$ of $P$. However, no 
line segment in $T_G$ can intersect $P_U$ by Lemma \ref{lem:opposite}. This
contradicts the existence of $u$. Therefore, any unguarded region of $P$ by $S$ must be bounded by the line segments 
in $L(P)$\footnote{If the bounding line segment is not in $T_G$ then a dominating line segment must be in $T_G$. 
See Figure \ref{fig:ex_poly} for an example.} on adjacent horizontal and
vertical sides, and by a staircase of $P$ on the other sides; we call these regions the \emph{critical regions} of $P$, and denote $R_C$ to be the set of critical regions
of $P$. We now have the following lemma.


\begin{lemma}
\label{lem:contradiction}
Every point of $P$ that is not inside a critical region of $P$ is visible to at least one
line segment in $S$. Moreover, each critical region of $P$ is a staircase.
\end{lemma}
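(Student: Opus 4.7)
The plan is to derive both claims from the construction preceding the statement together with Lemma~\ref{lem:opposite}.

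The first claim is essentially a restatement of the construction. Given any unguarded point $p \in P$, I would form a maximal axis-aligned unguarded rectangle $R_p$ through $p$, and then invoke the slab argument to partition $P$ into $P_S$, $P_{\overline{S}}$, and $P_U$ with $p \in R_p \subseteq P_U$. Since $P_U$ is defined to be a critical region, every unguarded point belongs to some critical region; the contrapositive is the first claim.

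For the second claim, I would fix a critical region $C$ and show that the portion of $\partial C$ lying in the interior of $P$ (equivalently, on $\partial P$) is a monotone orthogonal staircase. The key observation I would use is that the hatched cross-shaped region around $R_p$ contains at least one horizontal and at least one vertical segment of $T_G$, so $\overline{S}$ contains segments in both orientations. Lemma~\ref{lem:opposite} then rules out any segment of $T_G$, in either orientation, from intersecting $P_U$. Suppose for contradiction that the relevant portion of $\partial C$ is not a staircase. Since it is a simple orthogonal chain joining the two grid-segment sides of $C$, non-monotonicity forces the existence of a reflex vertex $u$ of $P$ on $\partial C$ whose reflex angle opens into $C$, so that both $H_u$ and $V_u$ extend from $u$ into $P_U$. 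Each of $H_u, V_u$ lies in $L(P)$; either it already belongs to $T_G$, or it is dominated by some segment of $T_G$ which, by the definition of dominance, must traverse the same horizontal/vertical line and therefore also intersect $P_U$. In either case we produce a segment of $T_G$ meeting $P_U$, contradicting the consequence of Lemma~\ref{lem:opposite} noted above. Hence no such reflex vertex exists and $\partial C$ is a staircase.

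The main obstacle I anticipate is the geometric step inside the contradiction: formally arguing that a non-staircase orthogonal chain must contain a reflex vertex $u$ whose \emph{both} maximal chords $H_u$ and $V_u$ enter $P_U$ (rather than one of them escaping through one of the two grid-segment sides of $C$), and verifying that the dominance relation preserves the property of intersecting $P_U$. Both are straightforward from the orthogonal geometry of $P$ and the definitions of $H_u$, $V_u$, and dominance, but they should be stated carefully to make the contradiction airtight.
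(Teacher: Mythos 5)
Your proposal matches the paper's own argument, which appears in the discussion immediately preceding the lemma: the first claim is by construction (every unguarded point generates an $R_p$ whose unguarded corner $P_U$ is a critical region), and the second follows by supposing a reflex vertex $u$ in $P_U$, noting $H_u,V_u\in L(P)$ would force a segment of $T_G$ (or a dominating one) to intersect $P_U$, contradicting Lemma~\ref{lem:opposite} applied in both orientations. You are in fact somewhat more explicit than the paper about the two delicate points you flag (the paper relegates the dominance issue to a footnote), but the route is essentially identical.
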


Let $OPT_{GG}$ denote an optimal solution for the MMGG problem on $G_P$. We first prove
that $\lvert OPT_{GG}\rvert\leq \lvert OPT_{GP}\rvert$.

\begin{lemma}
\label{lem:polygonAndGP}%
For any feasible solution $M$ for the MGSC problem on $P$, there exists a feasible solution $S'$ for the MMGG problem on $G_P$ such that $\lvert S'\rvert\leq\lvert M\rvert$.
\end{lemma}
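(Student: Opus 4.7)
The plan is to construct $S'$ by mapping each $s\in M$ to a grid segment $f(s)\in T_G$ of the same orientation whose visibility region in $P$ contains that of $s$. For a horizontal $s$, first extend $s$ to the maximal horizontal segment $\hat{s}\subseteq P$ at the same height, and then translate $\hat{s}$ vertically within the maximal axis-aligned slab $R_s\subseteq P$ containing $\hat{s}$ until $\hat{s}$ lies on a horizontal edge of $P$ containing a reflex vertex $u$; this produces a segment coinciding (possibly after extending horizontally to the maximal component) with $H_u\in L(P)$, and both steps can only enlarge the visibility region in $P$. If $H_u\notin T_G$, replace it by its dominator $H_{u'}\in T_G$, and take $f(s)$ to be the resulting grid segment; vertical cameras are handled symmetrically. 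Letting $S':=\{f(s):s\in M\}$ immediately gives $|S'|\le|M|$.

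It remains to verify that $S'$ is feasible for MMGG, i.e., that $S'$ covers $G_P$ under grid visibility and is a guarded set on $G_P$. For coverage, any $q\in G_P$ lies on some $t_q\in T_G$ and is seen in $P$ by some $s\in M$, hence also by $f(s)$ since $V_P(f(s))\supseteq V_P(s)$; the perpendicular segment witnessing grid visibility is carried by $t_q$ itself, because when $f(s)$ and $t_q$ are perpendicular polygon visibility forces $f(s)\cap t_q\ne\emptyset$ (so $f(s)$ sees all of $t_q$ along $t_q$), and the parallel case is handled by a short case analysis using the dominance construction, which collapses parallel grid-representatives onto the same maximal line component of $P$.

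The main obstacle is showing $S'$ is a guarded set on the grid. Since grid visibility between perpendicular segments requires intersection, this reduces to exhibiting, for each horizontal $f(s)\in S'$, a vertical $f(t)\in S'$ with $f(s)\cap f(t)\ne\emptyset$ on the grid. Take $t\in M$ to be the vertical guard of $s$ in $M$; the condition $V_P(t)\supseteq s$ forces the horizontal extent of $\hat{s}$ to contain $x_t$ and the vertical extent of $\hat{t}$ to contain $y_s$, so $\hat{s}$ and $\hat{t}$ cross at $(x_t,y_s)$. Moreover, the slabs $R_s$ and $R_t$ are \emph{orthogonally nested}: the horizontal slice of $R_t$ at height $y_s$ lies in $P$ and hence inside the maximal horizontal $\hat{s}$, so $R_s$'s $x$-range contains $R_t$'s; symmetrically, $R_t$'s $y$-range contains $R_s$'s. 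Translating to slab boundaries yields $H_u$ whose $x$-extent still contains the $x$-range of $R_t$ and $V_v$ whose $y$-extent contains the $y$-range of $R_s$, so $H_u\cap V_v\ne\emptyset$. Passing to the dominators in $T_G$ preserves this crossing because the dominator must see the original crossing point in $P$, forcing its extent to include the appropriate coordinate. The principal technical effort lies in carefully tracking these coordinates through the successive translation and pruning steps.
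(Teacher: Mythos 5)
Your construction is essentially the paper's own proof: snap each segment of $M$ onto a collinear member of $L(P)$ through a reflex vertex without shrinking its visibility region, replace it by its dominator in $T_G$ if it was pruned, and observe that this sends each element of $M$ to at most one grid segment, giving $\lvert S'\rvert\leq\lvert M\rvert$. The additional work you do to check that coverage and guardedness survive the passage to grid visibility --- the crossing point $(x_t,y_s)$, the nested slabs, and the preservation of the crossing under domination --- is precisely the content the paper dismisses as ``straightforward,'' and your one remaining hand-wave (the ``short case analysis'' for a grid point seen only by parallel cameras) is likewise elided in the paper's proof, so the argument neither departs from nor falls short of the published one.
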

\begin{proof}
Let $M$ be a feasible solution to the MGSC problem on $P$; that is, $M$ is a guarded set of orthogonal line segments
inside $P$ that collectively guard $P$. We construct a feasible solution $S'$ for the MMGG problem such that
$\lvert S'\rvert\leq\lvert M\rvert$. To compute $S'$, for each horizontal line segment $s\in M$ (resp., vertical
line segment $s\in M$), move $s$ up or down (resp., to the left or to the right) until it is collinear with a line segment
$s\in L(P)$. If $s\in T_G$, then add $s$ to $S'$; otherwise, add $s'$ to $S'$, where $s'\in L(P)$ is the line segment that dominates $s$. Note that there exists at least one such line segment $s'$ because otherwise the line segment $s$ would have not been removed from $L(P)$. It is straightforward to see that the union of visibility regions of line segments in $M$ is a subset of the visibility regions of line segments in $S'$. Since the camera travelling along each line segment in $M$ is seen by at least one other camera (see the definition of a guarded set of sliding cameras) and the grid $G_P$
is entirely contained in $P$, we conclude that $S'$ is a feasible solution for the MMGG problem on $G_P$. The inequality
$\lvert S'\rvert\leq\lvert M\rvert$ follows from the fact that each line segment in $M$ corresponds to at most one line
segment in $S'$. This completes the proof of the lemma.
\end{proof}

Next we need to find a set of minimum cardinality of orthogonal line
segments inside $P$ that collectively guard the critical regions of $P$.



\subsection{Guarding Critical Regions: A \boldmath$(3/2)$-Approximation Algorithm}
\label{subsec:exactAlgorithmForCriticals}
In this section, we give an approximation algorithm for the problem of guarding the critical regions of
$P$. The algorithm relies on reducing the problem to the minimum edge cover problem in graphs. The
minimum edge cover problem in graphs is solvable in $O(n^{5/2})$ time, where $n$ is
the number of graph vertices~\cite{vazirani1980}. We first need the following result:

\begin{lemma}
\label{lem:atLeastOneSegment}
Every critical region in $R_C$ is guarded entirely by some orthogonal line segment in $L(P)$.
\end{lemma}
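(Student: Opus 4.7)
The plan is to exhibit, for each critical region $R\in R_C$, an explicit segment of $L(P)$ that guards all of $R$, by exploiting the staircase characterization established in Lemma~\ref{lem:contradiction} and the preceding discussion. First I would recall that $R$ is an orthogonal sub-polygon bounded on two adjacent sides by a horizontal chord $H_u$ and a vertical chord $V_v$ from $L(P)$, and on the remaining sides by a staircase of $\partial P$ whose reflex vertices are oriented toward the location of $S$. Without loss of generality, place $S$ in the lower-left relative to $R$, so that $H_u$ bounds $R$ from below, $V_v$ bounds $R$ from the left, the two chords meet at the lower-left corner of $R$, and the staircase runs monotonically down-and-to-the-right from the top end of $V_v$ to the right end of $H_u$.

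I would then claim that the sliding camera moving along $H_u$ alone guards every point of $R$. For an arbitrary $q=(x_q,y_q)\in R$, the staircase monotonicity implies that the vertical slice $R\cap(\{x_q\}\times\mathbb{R})$ is a single vertical segment whose bottom endpoint lies on $H_u$ at $(x_q,y_u)$. Consequently the vertical chord from $q$ down to that point is contained in $R\subseteq P$, so the camera on $H_u$ sees $q$. Since $q\in R$ was arbitrary, $H_u$ guards $R$ in its entirety, and $H_u\in L(P)$ as required.

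The main obstacle in turning this sketch into a rigorous proof is the monotonicity step: one must argue carefully that the orientation of each reflex vertex of the staircase toward $S$ precludes the staircase from reversing direction, since a reversal would cause a horizontal step of $\partial P$ to cut off some point of $R$ from $H_u$. This follows from the characterization in Lemma~\ref{lem:contradiction} but deserves an explicit verification, probably by traversing the staircase from one endpoint to the other and checking step by step that its $x$-coordinate is non-increasing while its $y$-coordinate is non-decreasing. A symmetric argument shows that $V_v$ works equally well, and should the particular bounding chord of $R$ have been removed from $T_G$ due to domination, the dominating chord in $L(P)$ (guaranteed by the footnote to Lemma~\ref{lem:contradiction}) has a superset visibility region and therefore still guards $R$.
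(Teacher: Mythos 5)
Your proposal is correct and takes essentially the same approach as the paper: both arguments exploit the fact that a critical region $R$ is a monotone staircase, so every vertical (or horizontal) slice of $R$ is an interval meeting a single common chord, which therefore guards $R$ entirely. The only difference is the choice of witness segment --- the paper uses the segment of $L(P)$ through the lowest or highest reflex vertex on the boundary of $R$, whereas you use the bounding chord of $R$ itself --- and both choices succeed for the same monotonicity reason you identify.
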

\begin{proof}
Observe that if $P$ is a rectangle, then the MSC problem is trivial to solve. 
Suppose that $P$ is not a rectangle
and so it has at least one reflex vertex.
Furthermore, suppose that some regions of $P$ are not guarded by $S$ (the set of segments returned by
solving the MMGG problem on $G_P$), i.e., the set $R_C$ of critical regions of $P$ is non-empty.
Let $R\in R_C$ be a critical region of $P$. The lemma is implied by the fact that there exists at least one
reflex vertex on the boundary of $R$; this is because $P$ is not a rectangle and the set of line segments in
$S$ do not guard $R$. It is now
straightforward to see that one of the orthogonal line segments in $L(P)$ that passes through either the lowest or
the highest reflex vertex of $R$ can see the critical region $R$ entirely.
\end{proof}

Recall $R_C$, the set of critical regions of $P$. We construct a graph $H_P$ associated with $P$
as follows: for each critical region $R\in R_C$, we add a vertex $r_H$ to $H_P$. Two vertices
$r_H$ and $r'_H$ are adjacent in $H_P$ if and only if there exists an orthogonal line segment inside
$P$ that can guard both critical regions $R$ and $R'$ entirely. Finally, we add a self-loop edge for every
isolated vertex of $H_P$.


\begin{lemma}
\label{lem:atMostTwoCriticals}
Any orthogonal line segment inside $P$ can guard at most two critical regions of $P$ entirely.
\end{lemma}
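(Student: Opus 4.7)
The plan is to assume without loss of generality that $L$ is horizontal and to show that $L$ can entirely guard at most one critical region lying above $L$ and at most one lying below. The visibility region of $L$ inside $P$ is the union of two vertical histograms $U_L$ (above $L$) and $D_L$ (below $L$), each consisting of those points of $P$ reachable from $L$ by a vertical segment that stays in $P$. Any critical region entirely guarded by $L$ is contained in $U_L \cup D_L$, and since a critical region is a connected staircase whose two straight boundary sides in $L(P)$ meet at a single corner, such a region lies entirely in one of $U_L$ or $D_L$.

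It then suffices to bound by one the number of critical regions contained in $U_L$; the argument for $D_L$ is symmetric. Suppose for contradiction that $R_1$ and $R_2$ are distinct critical regions with $R_1, R_2 \subseteq U_L$. Each $R_i$ arises, as in the discussion preceding Lemma~\ref{lem:contradiction}, from a maximal axis-aligned rectangle $R_{p_i} \subseteq R_i$ that is not guarded by $S$ together with the induced partition of $P$ into three subpolygons in which $R_i$ sits in the unguarded corner $P_{U,i}$ and the segments of $S$ are confined to the opposite corner $P_{S,i}$. Applying Lemma~\ref{lem:opposite} to each of $R_1$ and $R_2$, I would argue that the segments of $S$ must simultaneously lie in $P_{S,1}$ and in $P_{S,2}$; since both $R_1$ and $R_2$ project vertically onto $L$ within the common histogram $U_L$, this joint constraint forces $R_1$ and $R_2$ to overlap, contradicting that they are distinct critical regions.

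The main obstacle is the case analysis on the four possible orientations of the staircases $R_1$ and $R_2$ (determined by which corner their $L(P)$ sides meet at) together with their relative horizontal order along $L$. In particular, one must check that in every configuration the horizontal bounding segments $h_1, h_2 \in L(P)$ of $R_1, R_2$ and the vertical bounding segments $v_1, v_2 \in L(P)$ witness the incompatibility of $P_{S,1}$ and $P_{S,2}$ via Lemma~\ref{lem:opposite}. A secondary concern is the degenerate case in which $L$ itself meets the closure of a critical region; this should be treated separately by perturbing $L$ slightly in the direction perpendicular to itself and reducing to the generic case.
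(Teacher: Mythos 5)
Your decomposition into ``at most one entirely guarded critical region above $L$ and at most one below'' is not the paper's argument, and it has a genuine gap at its core. The paper's proof charges each entirely guarded critical region to an \emph{endpoint} of the segment: to see a critical region $R$ in full, the segment must contain an entire bounding edge of $R$ and hence have an endpoint pinned on the boundary of $P$ at that edge's end; two endpoints give at most two regions. Your invariant is a different one (charging to the two \emph{sides} of $L$), and it is the wrong invariant: a single horizontal segment can entirely guard two critical regions that both lie above it --- one staircase opening up-and-left whose bounding edge $h_1$ is covered near the left endpoint of $L$, and one opening up-and-right whose bounding edge $h_2$ is covered near the right endpoint --- with $S$ confined to the region below and between them. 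This configuration is consistent with Lemma~\ref{lem:opposite} applied to each region separately, so your claimed bound of one region per histogram $U_L$, $D_L$ fails even though the total of two (the actual statement of the lemma) still holds via the endpoint argument.

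The specific step that breaks is the claimed contradiction: from ``$S$ lies simultaneously in $P_{S,1}$ and in $P_{S,2}$'' you infer that $R_1$ and $R_2$ must overlap. That inference is false. Take $P_{S,1}$ to be the lower-right corner of the partition induced by $R_{p_1}$ and $P_{S,2}$ the lower-left corner of the partition induced by $R_{p_2}$; their intersection is a nonempty lower-middle region in which $S$ can sit, while $P_{U,1}$ (upper-left) and $P_{U,2}$ (upper-right) remain disjoint. Since you also explicitly defer the four-orientation case analysis that is supposed to rule this out --- and that case analysis cannot succeed, because the configuration above is realizable --- the proposal does not establish the lemma. The fix is to abandon the per-side count and argue as the paper does: entire guarding of a critical region consumes one endpoint of the segment, so a segment guards at most two critical regions entirely.
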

\begin{proof}
Let $s$ be an orthogonal line segment inside $P$. Observe that the only way for $s$ to guard a critical region $R$ entirely is that at least
one of its endpoints lies on the boundary of $P$, covering one entire edge of $R$; see Figure~\ref{fig:atMostTwoCriticals} for an example.
Therefore, $s$ can guard at most two critical regions of $P$.
\end{proof}


\begin{figure}[t]
\centering%
\includegraphics[width=0.3\textwidth]{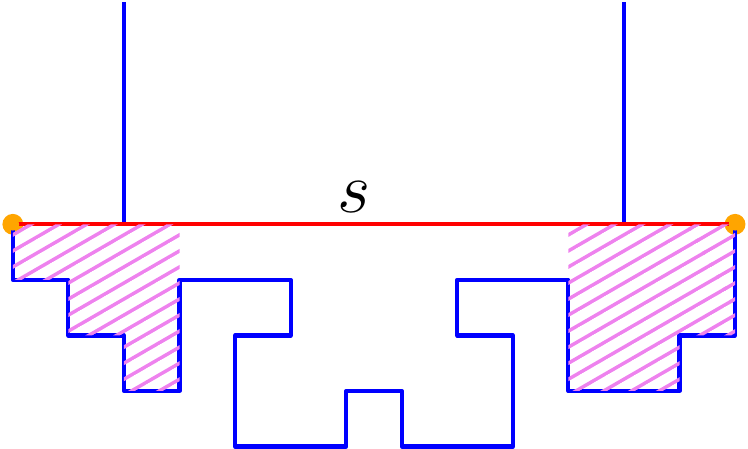}
\caption{Any orthogonal line segment inside $P$ can guard at most two critical regions
of $P$ entirely.}
\label{fig:atMostTwoCriticals}%
\end{figure}

\begin{lemma}
\label{lem:criticalEdgeCoverEquivalence}
The problem of guarding the critical regions of $P$ using only those line segments that may individually guard a critical region reduces to the minimum edge cover problem on $H_P$.
\end{lemma}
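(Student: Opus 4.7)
The plan is to prove the lemma by exhibiting a cardinality-preserving correspondence in both directions between sets of orthogonal line segments that guard all critical regions and edge covers of $H_P$. The graph $H_P$ is set up precisely so that each relevant line segment is representable by an edge of $H_P$: by Lemma~\ref{lem:atMostTwoCriticals} such a segment guards one or two critical regions, and by Lemma~\ref{lem:atLeastOneSegment} each critical region is guarded by at least one segment in $L(P)$, so isolated vertices are handled via the self-loops introduced in the construction of $H_P$.

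First I would show that any feasible guarding set $M$ for the critical regions induces an edge cover of $H_P$ of size at most $|M|$. For each segment $s\in M$, I pick one edge as follows: if $s$ entirely guards two critical regions $R$ and $R'$, pick the edge $(r_H,r'_H)$, which exists by the definition of $H_P$; if $s$ guards a unique critical region $R$, pick the self-loop at $r_H$ if $r_H$ is isolated, and otherwise pick any edge of $H_P$ incident to $r_H$. Since every $R\in R_C$ is guarded by some $s\in M$, the vertex $r_H$ is incident to at least one picked edge, so the resulting edge set is an edge cover.

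Conversely, I would show that any edge cover $E$ of $H_P$ yields a guarding set of size $|E|$. For each non-loop edge $(r_H,r'_H)\in E$, by definition of $H_P$ there is an orthogonal line segment in $P$ that guards both $R$ and $R'$ entirely; pick any such segment. For each self-loop in $E$ at a vertex $r_H$, use Lemma~\ref{lem:atLeastOneSegment} to pick a segment from $L(P)$ that guards $R$ entirely. Because $E$ covers every vertex of $H_P$, every critical region of $P$ is guarded, and the number of chosen segments equals $|E|$. Combining the two directions, the minimum cardinality of a guarding set for the critical regions (restricted to segments that individually guard at least one critical region) equals the minimum edge cover number of $H_P$, which is the claimed reduction.

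The main subtlety is the asymmetry created by segments that guard exactly one critical region whose vertex is not isolated in $H_P$: such a segment does not correspond to a canonical edge. I expect the main obstacle to be verifying that the arbitrary choice of an incident edge in this case is always possible (i.e.\ that $r_H$ truly has an incident edge whenever it lacks a self-loop) and does not inflate the edge-cover cardinality beyond $|M|$; both facts follow because $r_H$ is non-isolated precisely when at least one edge is incident to it, and because we select at most one edge per segment of $M$. Given that, the equivalence is immediate and the lemma follows.
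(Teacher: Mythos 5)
Your proof is correct and follows essentially the same two-directional, cardinality-preserving correspondence as the paper's own argument. In fact, your treatment of a segment that entirely guards exactly one critical region whose vertex is \emph{not} isolated (picking an arbitrary incident edge rather than a self-loop) is slightly more careful than the paper's Part~2, which adds ``the self-loop edge of $H_P$ that corresponds to $R$'' even though self-loops are only introduced at isolated vertices of $H_P$.
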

\begin{proof}
We prove that \begin{inparaenum}[(i)]\item for any solution $S$ to the minimum edge cover
problem on $H_P$, there exists a solution $S'$ for guarding the critical regions of $P$ such
that $\lvert S'\rvert=\lvert S\rvert$, and that \item for any solution $S'$ to the problem of guarding
the critical regions of $P$, there exists a solution $S$ to the minimum edge cover problem on
$H_P$ such that $\lvert S\rvert=\lvert S'\rvert$. \end{inparaenum}\\

\noindent{\bf Part 1.} Choose any edge cover $S$ of $H_P$. We construct a solution $S'$ for guarding the critical regions of $P$ as follows. For each edge
$e=(r_H, r'_H)\in S$ let $s_e$ be the line segment in $P$ that can see both critical
regions $R$ and $R'$ of $P$; we add $s_e$ to $S'$. It is straightforward to see that
the line segments in $S'$ collectively guard all critical regions of $P$.\\

\noindent{\bf Part 2.} Choose any solution $S'$ for guarding the critical regions of $P$. We
now construct a solution $S$ for the minimum edge cover problem on $H_P$. By Lemma~\ref{lem:atMostTwoCriticals},
we know that every line segment in $S'$ can see at most two critical regions of $P$.
First, for each line segment in $S'$ that can see exactly one critical region $R$ of $P$, we add the
self-loop edge of $H_P$ that corresponds to $R$ in $S$. Next, for each line segment $s\in S'$ that
can see two critical regions of $P$, we add to $S'$ the edge in $H_P$ that corresponds to $s$.
Since any line segment in $S'$ can see at most two critical regions of $P$, we conclude that every
vertex of $H_P$ is incident to at least one edge in $S$ and, therefore, $S$ is a feasible solution for
the minimum edge cover problem on $H_P$.
\end{proof}

In general, it is possible for the solution $S'$ to be non-optimal. Only those edges which may individually guard a critical region were considered, while an optimal guarding solution may use two line segments to collectively guard a critical region, as shown in Figure \ref{fig:ex_poly}. By Lemma \ref{lem:atLeastOneSegment}, $S'$ requires at most one edge for each critical region and,
therefore, the number of guards returned y our algorithm is at most equal to the number of critical regions.
If an optimal solution uses two segments to collectively guard one critical region, 
these two edges suffice to guard three critical regions, while our solution uses three segments to guard the same three
critical regions. This results in an approximation factor of $3/2$ in the number of segments used to guard 
the set of critical regions.

\begin{figure}[t!]
\centering%
\includegraphics[width=0.9\textwidth]{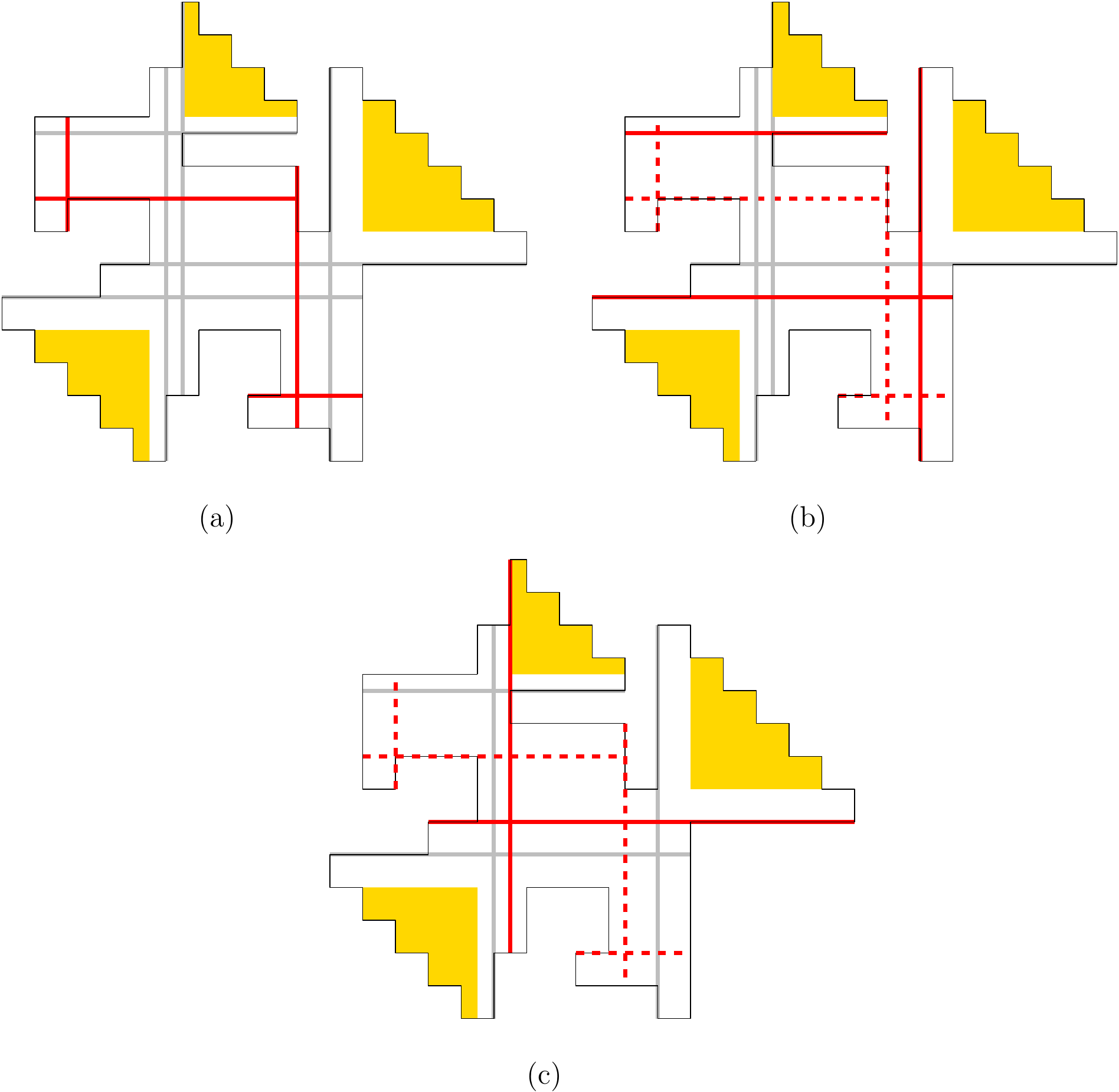}
\caption{An example of a polygon for which two line segments may collectively guard a critical region.  (a) The polygon has unguarded critical regions after finding an optimal MMGG solution. The thick red line segments indicate guarding lines, light gray line segments are unused, and the shaded regions are unguarded.
(b) A possible solution (solid red line segments) for guarding the critical regions using the edge guarding approach. (c) An optimal critical region guarding solution for the polygon.}
\label{fig:ex_poly}%
\end{figure}

We now examine the running time of the algorithm. Let $n$ denote the number of vertices of $P$.
To compute the critical regions of $P$, we first compute the set of staircases of $P$ and, for each of them, we check to see whether
they are guarded by the set of line segments in $S$. Each critical region of $P$ can be found
in $O(n)$ time and so the set of unguarded critical regions of $P$ is easily computed in $O(n^2)$ time.
Moreover, the graph $H_P$ can be constructed
in $O(n^2)$ time by checking whether there is an edge between every pair of vertices of the graph. Therefore,
by Lemma~\ref{lem:criticalEdgeCoverEquivalence} and the fact that the minimum edge cover
problem is solvable on a graph with $n$ vertices in $O(n^{5/2})$ time, we have the following
lemma.
\begin{lemma}
\label{lem:exactAlgForCriticals}
There exists a $(3/2)$-approximation algorithm for solving the problem of guarding the critical regions of
a simple orthogonal polygon $P$ with $n$ vertices in $O(n^{5/2})$ time.
\end{lemma}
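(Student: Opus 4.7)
The plan is to package the ingredients already proved into the stated algorithm and bound both its running time and approximation ratio. First I would compute the set $R_C$ of critical regions by enumerating the staircases of $P$ formed by consecutive reflex vertices: there are $O(n)$ candidate staircases, and each can be tested in $O(n)$ time against the set $S$ returned by the MMGG algorithm on $G_P$, yielding $O(n^2)$ time. Next I would build $H_P$: there is one vertex per critical region, and an edge between $r_H$ and $r'_H$ exactly when some segment in $L(P)$ individually covers both $R$ and $R'$; since $|L(P)|=O(n)$ and each segment covers at most two critical regions by Lemma~\ref{lem:atMostTwoCriticals}, checking adjacency for all $O(n^2)$ pairs costs $O(n^2)$ overall. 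Isolated vertices receive a self-loop as specified.

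Then I would invoke Vazirani's $O(n^{5/2})$ algorithm for minimum edge cover on $H_P$ to obtain a cover; by Lemma~\ref{lem:criticalEdgeCoverEquivalence}, this cover translates into a set $S'$ of line segments guarding every critical region, with $|S'|$ equal to the cover size. Since the edge cover consists only of edges whose endpoints correspond to critical regions individually guardable by a single segment (each critical region admits such a segment by Lemma~\ref{lem:atLeastOneSegment}), the preprocessing stages are dominated by the $O(n^{5/2})$ edge-cover call, giving the claimed running time.

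For the approximation factor, let $OPT_C$ be an optimal (unrestricted) solution to the critical-region guarding problem. By Lemma~\ref{lem:atMostTwoCriticals}, any individual segment of $OPT_C$ that entirely guards some critical region covers at most two of them. The only way $OPT_C$ can outperform the edge-cover solution is by using two segments that together guard a critical region $R$ that neither covers alone; each such pair then accounts for at most three critical regions (one from each endpoint on the boundary of $P$, plus $R$). Our algorithm, guided by Lemma~\ref{lem:atLeastOneSegment}, uses at most one segment per critical region, so where $OPT_C$ spends two segments on three regions, we spend three, giving a local ratio of $3/2$ that composes into $|S'|\le(3/2)|OPT_C|$.

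The main obstacle I anticipate is making the last step globally rigorous via a clean charging argument. Specifically, I would partition the segments of $OPT_C$ into singletons (each completely guarding its set of critical regions and already matched at ratio $1$ by Lemma~\ref{lem:criticalEdgeCoverEquivalence}) and complementary pairs (two segments whose union covers a region neither covers alone), ensuring that the pairing is disjoint so that no critical region is charged twice. Within each pair, at most three critical regions are collectively guarded, and Lemma~\ref{lem:atLeastOneSegment} guarantees our algorithm needs at most three individual segments for them; summing across the partition yields $|S'|\le(3/2)|OPT_C|$, completing the proof of the lemma.
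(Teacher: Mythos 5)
Your proposal follows essentially the same route as the paper: compute the critical regions in $O(n^2)$ time, build $H_P$ in $O(n^2)$ time, invoke the $O(n^{5/2})$ minimum edge cover algorithm via Lemma~\ref{lem:criticalEdgeCoverEquivalence}, and bound the ratio by charging the optimum's savings to pairs of segments that collectively guard a critical region neither guards alone, against the algorithm's at-most-one-segment-per-region guarantee from Lemma~\ref{lem:atLeastOneSegment}. The global charging step you flag as the remaining obstacle is treated exactly as informally in the paper itself, so your argument matches the published proof in both structure and level of detail.
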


Given any simple orthogonal polygon $P$, we find a set of sliding cameras that guards $P$ 
by first solving the instance $G_P$ of the MMGG problem determined by $P$. This may leave a set of critical regions within $P$
that remain unguarded. We then add a second set of sliding cameras to guard these critical regions.
Recall the set $S$, an optimal solution to the MMGG problem on $G_P$, found in $O(n^2)$ time,
where $n$ is the number of vertices of $P$.
By Lemma~\ref{lem:exactAlgForCriticals}, we approximate the problem of guarding the critical
regions of $P$ in $O(n^{5/2})$ time;
let $S_C$ be the solution returned by the algorithm. 
Since the union of the critical regions of $P$ is
a subset of $P$, any feasible solution to the MSC problem also guards the critical regions of $P$. 
Therefore, $\lvert S_C\rvert\leq (3/2) \cdot \lvert OPT_P\rvert$.
Moreover, by Lemma~\ref{lem:polygonAndGP} we know $\lvert OPT_{GG}\rvert\leq \lvert OPT_{GP}\rvert$ and since
$\lvert OPT_{GP}\rvert\leq 2\cdot\lvert OPT_{P} \rvert$ we have that $\lvert OPT_{GG}\rvert\leq 2\cdot\lvert OPT_P\rvert$.
Therefore, by combining $S$ and $S_C$ we obtain a feasible solution
to the MSC problem whose cardinality is at most 7/2 times $\lvert OPT_P\rvert$; that is,
$\lvert S\cup S_C\rvert\leq (7/2) \cdot\lvert OPT_P\rvert$. This gives the main result:


\begin{theorem}
\label{thm:msc3Approx}
There exists an $O(n^{5/2})$-time $(7/2)$-approximation algorithm for the MSC problem on any simple orthogonal polygon $P$ with $n$ vertices.
\end{theorem}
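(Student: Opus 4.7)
The plan is to present a two-phase algorithm whose correctness and approximation ratio fall out by stitching together the lemmas already established in the section. First I would construct the grid $G_P$ from $P$ in $O(n^2)$ time (as noted after the construction), then invoke Theorem~\ref{thm:nSquarForGridProblem} to obtain an optimal solution $S$ to the MMGG problem on $G_P$ in $O(n^2)$ time, using Lemma~\ref{lem:gIsSimple} to certify that $G_P$ is a simple grid and thus eligible input for that theorem. Next, I would enumerate the critical regions $R_C$ of $P$ left unguarded by $S$ (each found in $O(n)$ time by Lemma~\ref{lem:contradiction}'s staircase characterization, so $O(n^2)$ total) and run the $(3/2)$-approximation of Lemma~\ref{lem:exactAlgForCriticals} in $O(n^{5/2})$ time to obtain a set $S_C$ of orthogonal line segments guarding all of $R_C$. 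The reported solution is $S\cup S_C$.

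For correctness, by Lemma~\ref{lem:contradiction} every point of $P$ outside the critical regions is already visible to some segment in $S$, and $S_C$ by construction covers every critical region; hence $S\cup S_C$ is a feasible MSC solution. For the approximation ratio I would chain three inequalities. By Lemma~\ref{lem:polygonAndGP} applied to an optimal MGSC solution, $|OPT_{GG}|\le|OPT_{GP}|$, and Observation~\ref{obs:mgscAndMSC} gives $|OPT_{GP}|\le 2|OPT_P|$, so $|S|=|OPT_{GG}|\le 2|OPT_P|$. For $S_C$, since every critical region lies inside $P$, any MSC solution also guards $R_C$, so an optimal guard set for $R_C$ has size at most $|OPT_P|$; Lemma~\ref{lem:exactAlgForCriticals} then yields $|S_C|\le (3/2)|OPT_P|$. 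Adding these bounds gives $|S\cup S_C|\le (7/2)|OPT_P|$.

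The running time is dominated by the $O(n^{5/2})$ edge-cover computation of Lemma~\ref{lem:exactAlgForCriticals}; the grid construction, the MMGG solver, the critical region extraction, and the construction of $H_P$ are each $O(n^2)$, so the overall bound is $O(n^{5/2})$. There is no real obstacle beyond verifying the chain of inequalities, since every ingredient has been proved; the only subtlety worth double-checking is that the $|OPT_P|$ used to bound $|S_C|$ is the same quantity used to bound $|S|$, which is justified by the observation that the union of critical regions is contained in $P$, so a feasible MSC solution is feasible for the critical-region subproblem as well. Combining these pieces yields the claimed $(7/2)$-approximation in $O(n^{5/2})$ time.
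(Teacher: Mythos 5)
Your proposal is correct and takes essentially the same route as the paper: solve the MMGG problem optimally on $G_P$, guard the remaining critical regions with the $(3/2)$-approximate edge-cover algorithm, and combine the bounds $\lvert S\rvert = \lvert OPT_{GG}\rvert \leq \lvert OPT_{GP}\rvert \leq 2\lvert OPT_P\rvert$ and $\lvert S_C\rvert \leq (3/2)\lvert OPT_P\rvert$ to obtain the $(7/2)$ ratio, with the running time dominated by the $O(n^{5/2})$ edge-cover computation. No substantive differences from the paper's argument.
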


As a consequence of our main result, we note that $\lvert S_C\rvert\leq (3/2) \cdot \lvert OPT_P\rvert \leq (3/2) \cdot\lvert OPT_{GP}\rvert$ and again by Lemma~\ref{lem:polygonAndGP} we know $\lvert OPT_{GG}\rvert\leq \lvert OPT_{GP}\rvert$. Therefore, $\lvert S\cup S_C\rvert\leq (5/2) \cdot\lvert OPT_{GP}\rvert$. To show that the set $S\cup S_C$ is a feasible solution for the MGSC problem, we first observe that every line segment in $S$ is guarded by at least one other line segment in $S$. Moreover, every grid segment that is not in $S$ is guarded by some line segment in $S$ because $S$ is a feasible solution for the MMGG problem. This means that every line segment in $S_C$ is guarded by at least one line segment in $S$. Therefore, we have the following result.
\begin{corollary}
Given a simple orthogonal polygon $P$ with $n$ vertices, there exists an $O(n^{5/2})$-time $(5/2)$-approximation algorithm for the MGSC problem on $P$.
\end{corollary}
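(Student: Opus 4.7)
The plan is to reuse, without modification, the pair $(S, S_C)$ produced by the algorithm of Theorem~\ref{thm:msc3Approx}: compute $S$ by solving the MMGG instance $G_P$ exactly in $O(n^2)$ time via Theorem~\ref{thm:nSquarForGridProblem}, then compute $S_C$ by the $(3/2)$-approximation of Lemma~\ref{lem:exactAlgForCriticals} in $O(n^{5/2})$ time, and output $S\cup S_C$. Two things then need to be verified to upgrade the conclusion from MSC to MGSC: a cardinality bound against $|OPT_{GP}|$ instead of $|OPT_P|$, and the fact that $S\cup S_C$ is actually a \emph{guarded} set of sliding cameras, not merely a covering one.

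For the cardinality bound, I would chain the inequalities already proved earlier in the paper. Since $S$ is optimal for MMGG on $G_P$, $|S|=|OPT_{GG}|$, and Lemma~\ref{lem:polygonAndGP} directly gives $|OPT_{GG}|\le|OPT_{GP}|$. For $S_C$, Lemma~\ref{lem:exactAlgForCriticals} gives $|S_C|\le(3/2)|OPT_P|$, and Observation~\ref{obs:mgscAndMSC} yields $|OPT_P|\le|OPT_{GP}|$, so $|S_C|\le(3/2)|OPT_{GP}|$. Adding the two bounds gives $|S\cup S_C|\le(5/2)|OPT_{GP}|$, which is the desired approximation ratio.

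The step I expect to require the most care is proving MGSC-feasibility of $S\cup S_C$. For segments of $S$, mutual guardedness is immediate from the fact that $S$ is a feasible MMGG solution on $G_P$. The nontrivial part is showing every $\sigma\in S_C$ is guarded by something in $S\cup S_C$. Here I would first argue that, without loss of generality, $S_C\subseteq T_G$: by Lemma~\ref{lem:atLeastOneSegment} each critical region admits a single guarding segment taken from $L(P)$, and if that segment was deleted from $G_P$ during construction then its dominator in $T_G$ still covers the same critical region (and, in the two–critical–region case of Lemma~\ref{lem:atMostTwoCriticals}, the dominator continues to reach the same second critical region because domination only enlarges the visibility region). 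Once $\sigma\in T_G$, feasibility of $S$ for MMGG on $G_P$ forces some segment of $S$ to see every point of $\sigma$, which is exactly the definition of $\sigma$ being guarded.

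Putting these pieces together gives the corollary: the algorithm runs in $O(n^{5/2})$ time, outputs a feasible MGSC solution, and has cardinality at most $(5/2)|OPT_{GP}|$. The only conceptually delicate point is the replacement argument above, which is needed because the $(3/2)$-approximation of Section~\ref{subsec:exactAlgorithmForCriticals} is defined in terms of edges of $H_P$ rather than in terms of grid segments of $G_P$; once that correspondence is made explicit, the rest of the proof is a direct substitution into the inequalities already established.
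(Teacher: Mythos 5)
Your proposal is correct and follows essentially the same route as the paper: the same chain of inequalities $\lvert S\rvert=\lvert OPT_{GG}\rvert\le\lvert OPT_{GP}\rvert$ and $\lvert S_C\rvert\le(3/2)\lvert OPT_P\rvert\le(3/2)\lvert OPT_{GP}\rvert$, and the same feasibility argument that segments of $S$ guard each other while each segment of $S_C$ is guarded by some segment of $S$ via the MMGG-feasibility of $S$. The only difference is that you make explicit the (worthwhile) step that the segments of $S_C$ may be taken, without loss of generality, to be grid segments of $T_G$, a point the paper leaves implicit.
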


\section{Conclusion}
\label{sec:conclusion}
In this paper, we studied a variant of the art gallery problem, introduced by Katz and Morgenstern~\cite{katz2011},
where sliding cameras are used to guard an orthogonal polygon and the objective is to guard the polygon with
minimum number of sliding cameras. We gave an $O(n^{5/2})$-time $(7/2)$-approximation algorithm to this problem
by deriving a connection between a guarded variant of this problem (i.e., the MGSC problem) and the problem of guarding simple grids with sliding cameras. The complexity of the
problem remains open for simple orthogonal polygons. Giving an $\alpha$-approximation algorithm, for any
$\alpha< 7/2$, is another direction for future work. Finally, studying the MGSC problem (the complexity of the problem or improved approximation results) might be of independent interest.\\

\subsection*{Acknowledgements} The authors thank Mark de Berg and Matya Katz for
insightful discussions of the sliding cameras problem.

\bibliographystyle{plain}
\bibliography{ref}

\end{document}